\documentclass[letterpaper, 10 pt, conference]{ieeeconf} 
\IEEEoverridecommandlockouts                             
\usepackage{graphicx} 

\usepackage{setspace}

\usepackage{amsmath, amsthm, amssymb}
\usepackage{siunitx}
\usepackage{hyperref}
\usepackage{mathtools}
\usepackage{comment}
\usepackage{acro}
\usepackage{todonotes}

\DeclareAcronym{DARE}{
  short=DARE,
  long=Deep-space Autonomous Robotic Explorer,
}

\DeclareAcronym{MuSCAT}{
  short=MuSCAT,
  long=Multi-Spacecraft Concept and Autonomy Tool,
}

\DeclareAcronym{EKF}{
  short=EKF,
  long=Extended Kalman Filter,
}

\DeclareAcronym{NEOs}{
short = NEOs, 
long = Near-Earth Objects
}

\DeclareAcronym{OREX}{
short = OSIRIS-REx,
long = {Origins, Spectral Interpretation, Resource Identification, and Security-Regolith Explorer}
}

\DeclareAcronym{SoC}{
  short=SoC,
  long= State-of-Charge, 
}

\DeclareAcronym{SRP}{
  short=SRP,
  long=Solar Radiation Pressure,
}

\DeclareAcronym{TCMs}{
    short = TCMs, 
    long = Trajectory Correction Maneuvers,
}

\DeclareAcronym{BBF}{
    short = BBF, 
    long = Bennu Body-fixed Frame,
}

\DeclareAcronym{GNC}{
    short = GNC, 
    long = {Guidance, Navigation and Control},
}

\DeclareAcronym{CVaR}{
    short = CVaR, 
    long = Conditional Value at Risk,
}

\DeclareAcronym{VaR}{
    short = VaR, 
    long = Value at Risk,
}

\DeclareAcronym{ADCS}{
    short = ADCS, 
    long = Attitude Determination and Control System,
}

\DeclareAcronym{MIP}{
    short = MIP, 
    long = Mixed-Integer Programming,
}

\DeclareAcronym{ODM}{
    short = ODM, 
    long = Orbit Departure Maneuver,
}

\DeclareAcronym{ORM}{
    short = ORM, 
    long = Orbit Recapture Maneuver,
}

\DeclareAcronym{HARP}{
    short = HARP,
    long = Hierarchical Asteroid Reconnaissance Planner,
}

\DeclareAcronym{QQ}{   
    short = QQ,
    long = Quantile-Quantile,
}

\DeclareAcronym{CADRE}{
 short = CADRE, 
 long = Cooperative Autonomous Distributed Robotic Exploration, 
}

\DeclareAcronym{SCP}{
 short = SCP, 
 long = Sequential Convex Programming, 
}

\DeclareAcronym{DAREabst}{
  short=DARE,
  long=Deep-space Autonomous Robotic Explorer,
}

\DeclareAcronym{MuSCATabst}{
  short=MuSCAT,
  long=Multi-Spacecraft Concept and Autonomy Tool,
}

\DeclareAcronym{SA}{
    short = SA,
    long = Sampling Approximation,
}

\DeclareAcronym{TOF}{
    short = TOF, 
    long = Time of Flight,
}

\DeclareAcronym{SPA}{
    short = SPA,
    long =  Symmetric Polytopic Approximation,
}

\DeclareAcronym{HARPabst}{
    short = HARP,
    long = Hierarchical Asteroid Reconnaissance Planner,
}

\DeclareAcronym{SOCP}{
    short = SOCP,
    long = Second Order Cone Programming,
}

\DeclareAcronym{SDP}{
    short = SDP,
    long = Semi Definite Programming,
}

\DeclareAcronym{LROE}{
    short = LROE,
    long = modified linearized relative orbital element,
}

\DeclareAcronym{LTV}{
    short = LTV, 
    long =  linear time-varying,
}

\DeclareAcronym{LTI}{
    short = LTI,
    long = linear time-invariant,
}

\DeclareAcronym{CW}{
    short = CW, 
    long = Clohessy-Wiltshire,
}

\DeclareAcronym{EP}{
    short = EP,
    long = electric propulsion,
}

\DeclareAcronym{NEXT}{
    short = NEXT,
    long = NASA’s Evolutionary Xenon Thruster,
}

\DeclareAcronym{SOAV}{
    short = SOAV,
    long = sum-of-absolute-values,
}

\DeclareAcronym{IPM}{
    short = IPM,
    long = interior-point method,
}

\DeclareAcronym{LP}{
    short = LP,
    long = Linear Programming,
}

\DeclareAcronym{RCS}{
    short = RCS,
    long = reaction control system,
}

\DeclareAcronym{PTR}{
    short = PTR,
    long = penalized trust region,
}

\DeclareAcronym{3-DOF}{
    short = 3-DOF,
    long = three-degree-of-freedom,
}

\DeclareAcronym{SAA}{
    short = SAA,
    long = sample average approximation,
}

\DeclareAcronym{SDE}{
    short = SDE,
    long = stochastic differential equation,
}

\DeclareAcronym{PDF}{
    short = PDF,
    long = probability density function,
}

\DeclareAcronym{FOH}{
    short = FOH, 
    long = first-order hold,
} 
\DeclareAcronym{WC-CVaR}{
 short = WC-CVaR,
 long = Worst Case CVaR,
 }

 \DeclareAcronym{WC-VaR}{
    short = WC-VaR,
    long = Worst Case VaR,
 }

 \DeclareAcronym{WC-CC}{
    short = WC-CC,
    long = Worst Case Chance Constraint,
 }

\DeclareAcronym{InSAR}{
    short = InSAR,
    long = Interferometric Synthetic Aperture Radar,
}

\DeclareAcronym{JSC}{
  short = JSC,
  long = Johnson Space Center,
}

\DeclareAcronym{CBF}{
  short = CBF,
  long = Control Barrier Function,
}

\DeclareAcronym{QP}{
  short = QP,
  long = Quadratic Program,
}

\DeclareAcronym{DKW}{
    short = DKW, 
    long = Dvoretzky--Kiefer--Wolfowitz,
}

\DeclareAcronym{CDF}{
    short = CDF,
    long = Cumulative Distribution Function,
}

\DeclareAcronym{SOS}{
    short = SOS,
    long = Sum of Squares
}

\usepackage{lipsum}

\makeatletter
\newcommand{\blankfootnote}[1]{%
  \bgroup
  \renewcommand{\thefootnote}{}%
  \footnotetext{\refstepcounter{footnote}#1}%
  \egroup
}

\newtheorem{theorem}{Theorem}
\newtheorem{lemma}{Lemma}

\newtheorem{corollary}{Corollary}
\theoremstyle{definition}

\newtheorem{remark}{Remark}
\newtheorem{assumption}{Assumption}
\newtheorem{problem}{Problem}

\newcommand{\R}{\mathbb{R}}
\newcommand{\N}{\mathcal{N}}

\newcommand{\norm}[1]{\left\|#1\right\|}
\newcommand{\inner}[2]{\langle #1, #2 \rangle}

\DeclareMathOperator{\CVaR}{CVaR}

\makeatother
\begin{document}

\title{\bf Probabilistic Control Barrier Functions for Systems with State Estimation Uncertainty using Sub-Gaussian Concentration}

\author{Kazuya Echigo, David E. J. van Wijk, Pol Mestres, Ersin Da\c{s}, Joel W. Burdick, and Aaron D. Ames \, 
\thanks{The work was supported by Air Force Office of Scientific Research Grant No. 113535-19668. The research was carried out in part at the Jet Propulsion Laboratory, California Institute of Technology, under a contract with the National Aeronautics and Space Administration (80NM0018D0004).}
\thanks{KE, DVW, PM, JB, and AA are with the Division of Engineering and Applied Science, California Institute of Technology, Pasadena, CA 91125, USA (e-mail: kazuyae@caltech.edu). E. Da\c{s} is with the Department of Mechanical, Materials, and Aerospace Engineering, Illinois Institute of Technology, Chicago, IL 60616, USA.}
}

\maketitle
\begin{abstract}

Safety-critical control systems, such as spacecraft performing proximity operations, must provide formal safety guarantees despite stochastic uncertainties from state estimation and unmodeled dynamics. 
Although~\acp{CBF} have been extended to stochastic systems, existing approaches typically face a trade-off between the tightness of probabilistic guarantees and computational tractability.
This paper presents a particle-based probabilistic~\ac{CBF} framework that overcomes this limitation by exploiting the sub-Gaussian structure of the barrier function increment under Gaussian uncertainties. We establish that Gaussian uncertainties propagating through Lipschitz-continuous control-affine dynamics preserve sub-Gaussianity of the barrier function increment, with explicit tail bounds. Leveraging this structure, we derive finite-sample bounds on the approximation error between particle-based~\ac{CVaR} estimates and ground-truth probabilistic constraints; applying this yields a tractable optimization problem formulation with finite-sample safety certificates. We show through numerical experiments how the proposed approach provides tight yet provably valid probabilistic safety guarantees.

\end{abstract}

\section{Introduction}

Designing safe controllers using~\acp{CBF} has been an extensive area of study within the control community.~\ac{CBF}-based controllers can be synthesized with provably safe guarantees and low computational overhead, with effectiveness validated in multiple applications~\cite{Ames2017}. However, conventional~\acp{CBF} have rarely been applied to systems subject to both model and state estimation uncertainty—such as deep space explorations~\cite{NesnasHockmanEtAl2021,kazuDARE2025}. This limitation arises because they are fundamentally designed for deterministic systems or systems with bounded disturbances. When the system is subject to stochastic disturbances with unbounded support—as commonly arises when using state estimators—such deterministic guarantees fail: the system will almost surely violate the safety constraints in finite time~\cite{Ryan2023_RSS}.

Motivated by this challenge, considerable research has focused on stochastic extensions of~\acp{CBF}, which provide probabilistic safety guarantees~\cite{clark_control_2019,mestres2025}. However, a key remaining challenge is obtaining a tractable representation of state estimation uncertainty as it propagates through nonlinear affine dynamics—the typical dynamical model in~\ac{CBF} applications~\cite{Ames2019}. When state uncertainty propagates through such dynamics, the resulting distribution generally lacks a closed-form expression. This forces existing methods to either adopt overly conservative approximations or rely on computationally expensive representations.

With that in mind, we focus our attention on extending probabilistic~\acp{CBF} to deal with both state estimation and model uncertainty, and formulate a tractable optimization-based algorithm to compute a probabilistically safe control input. Specifically, we exploit the sub-Gaussian structure of the barrier function increment under Gaussian uncertainties to derive tight, explicit concentration bounds without requiring conservative assumptions. This yields a tractable optimization problem (a~\ac{QP} when the safety and input constraints are linear) with rigorous probabilistic guarantees suitable for real-time implementation.

\paragraph{Related Works} 

There has been substantial effort to adapt~\ac{CBF} techniques for systems with uncertainty. Initial work focused on robust methods for bounded, deterministic disturbances~\cite{skolathaya2019_input,aanil2022_safe,mjankovic2018_robust,wang2023_disturbance}. Building on these foundations while relaxing those assumptions, other works have explored the problem under stochastic and possibly unbounded disturbances using modeling techniques such as martingale-based analysis~\cite{csantoyo2019_verification} or scenario sampling~\cite{mestres2025,aadonascimiento2024_probabilistically}. While these approaches account for uncertainty propagation through system dynamics, they either do not account for state estimation uncertainty or require restrictive assumptions.

Some works have examined this problem by specializing to specific classes of estimators. In continuous time,~\cite{aclark2021_control,yaghoubi_risk-bounded_2021} use martingale arguments to generate probabilistic safety certificates, while in discrete time,~\cite{jsteinhardt2012_finite} apply similar martingale-based approaches. Alternatively,~\cite{vahs2023belief,kishida2024risk} specialize to the extended Kalman filter in discrete time. However, those approaches only leverage the first moment of the uncertainty distribution and can be conservative when more distributional information is available. A promising technique that has recently emerged is to leverage the scenario-based approach using distribution-free concentration inequalities such as~\ac{DKW} inequality~\cite{Matti2024}. While this work accommodates state estimation uncertainty in nonlinear control-affine systems, 
it requires the random variable to have bounded support: the bound must be known a priori, and for unbounded distributions, artificial truncation is needed, introducing an approximation error that does not decay with the number of samples.

In contrast to these existing approaches, we propose a particle-based~\ac{CBF} framework that exploits the sub-Gaussian structure of the barrier function increment. 
Our approach requires no distribution truncation or bounded support assumptions, yielding an approximation error that decays to zero as the number of samples increases at a rate of $O((n\ln n)^{-1/2})$.

\paragraph{Statement of Contributions}
We develop a particle-based probabilistic~\ac{CBF} framework that simultaneously handles both state estimation and dynamics uncertainty with rigorous finite-sample safety guarantees. Our technical contributions are threefold:
\begin{enumerate}
    \item We establish that the barrier function increment is sub-Gaussian 
    when Gaussian uncertainties propagate through Lipschitz-continuous 
    nonlinear control-affine dynamics, with an explicit parameter bound.
    \item We derive explicit, finite-sample bounds on the particle-based~\ac{CVaR} approximation error without requiring the uncertainty to have bounded support. 
        \item We reformulate the probabilistic~\ac{CBF} condition as a tractable optimization problem, resolving the theory-computation tradeoff in existing approaches.
\end{enumerate}

The remainder of this paper is organized as follows. Section~\ref{sec:prelim} introduces the problem formulation and reviews probabilistic~\acp{CBF} and risk measures. Section~\ref{sec:subgaussian} presents our approach: establishing sub-Gaussianity of the barrier function increment, deriving finite-sample~\ac{CVaR} bounds, and reformulating the probabilistic~\ac{CBF} condition as a tractable optimization problem. Section~\ref{sec:numerical} demonstrates the efficacy of our approach through numerical experiments.

\section{Problem Setup} \label{sec:prelim}

We begin with a discrete-time stochastic system\footnote{
We use $\norm{\cdot}$ for the Euclidean norm and $\norm{\cdot}_{\mathrm{op}}$ for the operator norm.
For a symmetric matrix $\Sigma$, we denote its largest and smallest eigenvalues by $\lambda_{\max}(\Sigma)$ and $\lambda_{\min}(\Sigma)$. For any $a \in \mathbb{R}$, the positive part is $(a)_+ := \max(0, a)$. Also, let $[n] := \{1, 2, \ldots, n\}$ be the index set.}
\begin{equation}\label{eq:dynamics}
x_{t+1} = f(x_t) + g(x_t)u_t + d_t := F(x_t, u_t, d_t),
\end{equation}
where $x_t \in \R^{n_x}$ is the unknown true state, $u_t \in \R^{n_u}$ is the control input, and $d_t \in \R^{n_x}$ is an additive, state-independent process disturbance signal. The functions ${f: \R^{n_x} \to \R^{n_x}}$ and ${g: \R^{n_x} \to \R^{n_x \times n_u}}$ are Lipschitz continuous with constants $L_f$ and $L_g$, respectively. 

\begin{assumption}\label{ass:uncertainty}
The true state $x_t \sim \N(\mu_{x_t}, \Sigma_{x_t})$ and disturbance $d_t \sim \N(\mu_{d_t}, \Sigma_{d_t})$ are independent Gaussian random vectors with known moments. We assume the disturbance distribution is state and time-independent. The state estimate $\mu_{x_t}$ and uncertainty covariance $\Sigma_{x_t}$ may come from an estimator such as an~\ac{EKF}.
\end{assumption}

Let $h: \R^{n_x} \to \R$ be a barrier function defining the safe set as the zero sublevel set (opposite to standard~\acp{CBF}~\cite{Ames2017}):
\begin{equation*}
\mathcal{C} = \{x \in \R^{n_x} : h(x) \leq 0\}.
\end{equation*}

\begin{assumption}\label{ass:barrier}
The barrier function $h$ is Lipschitz continuous with constant $L_h$ and is convex. 
\end{assumption}

Our objective is to design an output feedback controller $k: \R^{n_x} \to \R^{n_u}$ that keeps the system trajectories within the safe set $\mathcal{C}$ at all times. Since the true state $x_t$ is unobservable, the controller must operate using the state estimate and account for its uncertainty. In deterministic systems, the discrete-time~\ac{CBF} condition $h(x_{t+1}) \leq \gamma h(x_t)$ with $\gamma \in [0,1]$ ensures forward invariance of $\mathcal{C}$~\cite{Ames2017,agrawal2017discrete}. However, stochastic disturbances fundamentally change this picture. When disturbances have unbounded support, the system exits the safe set in finite time with probability one~\cite{Ryan2023_RSS}, making deterministic safety guarantees impossible. This necessitates a probabilistic framework for safety. Following~\cite{mestres2025}, we say a controller $k: \R^{n_x} \to \R^{n_u}$ is $\epsilon$-safe over horizon $H$ if, starting from any $x_0 \in \mathcal{C}$, the probability of remaining in $\mathcal{C}$ for all $t \in [0,H]$ is at least $1-\epsilon$. To construct such controllers, we leverage the notion of probabilistic~\ac{CBF} developed in~\cite{mestres2025}. The key idea is to enforce a single-step probabilistic constraint: for each $x_t \in \mathcal{C}$, there exists a $u_t \in \mathcal{U}$ satisfying
\begin{equation}\label{eq:prob_cbf_condition}
\mathbb{P}\big(h(F(x_t,u_t,d_t)) \leq \gamma h(x_t)\big) \geq 1 - \alpha,
\end{equation}
where $\alpha \in (0,1)$ is the single-step risk parameter and $\gamma \in [0,1]$ is a decay rate. It has been shown~\cite{mestres2025} that if \eqref{eq:prob_cbf_condition} is satisfied at each time step with $\alpha \leq 1 - (1-\epsilon)^{1/H}$, the controller achieves $\epsilon$-safety over horizon $H$.

With this theoretical foundation for safe control, we implement an optimization problem with a probabilistic~\ac{CBF}, which is solved at each time step. Given a desired control $u_{\text{des}} \in \R^{n_u}$ from a nominal policy, we seek to solve
\begin{equation}\label{eq:main_problem}
\begin{aligned}
\min_{u_t \in \mathcal{U}} \quad & \norm{u_t - u_{\text{des}}}^2 \\
 \text{s.t. } \quad & \mathbb{P}\big(h(F(x_t, u_t, d_t)) \leq \gamma h(x_t)\big) \geq 1 - \alpha 
\end{aligned}
\end{equation}
where $\mathcal{U} \subseteq \R^{n_u}$ is a compact admissible control set with $u_{\max} := \sup_{u \in \mathcal{U}} \|u\|$.
In practice, this stochastic optimization problem presents several computational challenges. The chance constraint is in general non-convex, and evaluating the probability requires the distribution of
\begin{equation} \label{eq:main_rv}
\Delta h(x_t, u_t, d_t) := h(F(x_t, u_t, d_t)) - \gamma h(x_t),
\end{equation}
which is usually intractable for nonlinear $f$, $g$, and $h$. Note that we henceforth suppress the arguments of ${\Delta h}$ 
whenever there is no ambiguity.

\section{Proposed Approach}

We address the intractability of~\eqref{eq:main_problem} through two key steps: reformulating the constraint using~\ac{CVaR} and estimating it through particle propagation. Recall that~\ac{CVaR} provides a sufficient condition for chance constraints:
\footnote{$\CVaR_\alpha (\xi) = \mathbb{E} [ \xi |  \xi  \geq q_\alpha ]$ represents the expected value of a random variable $\xi$ in its worst $1-\alpha$-percent of outcomes, where $q_\alpha$ is the $\alpha$-quantile.}
\begin{align}
\CVaR_\alpha (\Delta h) \leq 0 
\implies  \mathbb{P}(\Delta h \leq 0) \geq 1 - \alpha.\label{eq:cvar_sufficient}
\end{align}
This reformulation replaces the discontinuous chance constraint with a continuous sufficient condition, amenable to tractable empirical approximations.
Evaluating $\CVaR_\alpha(\Delta h)$ requires knowledge of the distribution of $\Delta h$. Under Assumption~\ref{ass:uncertainty}, we represent the joint uncertainty in state and disturbance using $n$ i.i.d. samples:
\begin{align*}
\{(x_t^i, d_t^i)\}_{i=1}^n, \,\,
\text{where } x_t^i \sim \N(\mu_{x_t}, \Sigma_{x_t}), \,\, d_t^i \sim \N(\mu_{d_t}, \Sigma_{d_t}).    
\end{align*}
Accordingly, for a given control input $u_t$, each particle propagates through the dynamics as
\begin{equation}\label{eq:particle_dynamics}
x_{t+1}^i = f(x_t^i) + g(x_t^i)u_t + d_t^i, \quad i \in [n]
\end{equation}
yielding barrier function increments
\begin{equation}\label{eq:particle_barrier}
\Delta h^i(u_t) = h(x_{t+1}^i) - \gamma h(x_t^i), \quad i \in [n].
\end{equation}
These particle-based realizations, $\{\Delta h^i(u_t)\}_{i=1}^n$, 
yield a tractable constraint upper-bounding the true~\ac{CVaR}.

\subsection{Sub-Gaussianity of the Barrier Function Increment}
\label{sec:subgaussian}

To bound the gap between the particle-based~\ac{CVaR} approximation and the ground truth, we exploit the distributional properties of the barrier function increment $\Delta h $. In this section, we establish that $\Delta h$ is sub-Gaussian with an explicit parameter that depends on Lipschitz constants and covariance matrices~\footnote{A random vector $X$ is \emph{sub-Gaussian with parameter $\sigma$} if, for all $\lambda \in \mathbb{R}$ and unit vectors $\|v\|=1$, $\mathbb{E}[\exp(\lambda\langle X - \mathbb{E}[X], v \rangle)] \leq e^{\lambda^2 \sigma^2 / 2}$.}.
We begin with standard results.

\begin{lemma}[sub-Gaussian Vectors]\label{lem:gaussian_vector}
If $X \sim \N(\mu, \Sigma)$, then $X$ is sub-Gaussian with parameter $\sigma = C\sqrt{\lambda_{\max}(\Sigma)}$, where $C$ is a universal constant (typically $C \leq \sqrt{2}$).
\end{lemma}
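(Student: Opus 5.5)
The plan is to compute the moment generating function of each one-dimensional projection of $X$ directly, since Gaussian projections are again Gaussian with an explicit MGF. Fix a unit vector $v$ with $\norm{v}=1$ and set $Y := \inner{X-\mu}{v}$. Because $X-\mu \sim \N(0,\Sigma)$, we have $Y \sim \N(0, v^\top \Sigma v)$. To see this, write $X - \mu = \Sigma^{1/2} Z$ with $Z \sim \N(0,I)$. Then $Y = \inner{Z}{\Sigma^{1/2} v}$ is a linear combination of independent standard normals, and its variance is $\norm{\Sigma^{1/2}v}^2 = v^\top \Sigma v$.

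Next, we evaluate the MGF of a centered scalar Gaussian by completing the square in the density. This gives $\mathbb{E}[e^{\lambda Y}] = \exp(\lambda^2 v^\top\Sigma v/2)$ for all $\lambda \in \R$. By the Rayleigh quotient characterization, $v^\top \Sigma v \le \lambda_{\max}(\Sigma)\norm{v}^2 = \lambda_{\max}(\Sigma)$. Since the exponential is monotone, it follows that
\begin{equation*}
\mathbb{E}\big[\exp(\lambda \inner{X - \mathbb{E}[X]}{v})\big] \le \exp\!\big(\lambda^2 \lambda_{\max}(\Sigma)/2\big),
\end{equation*}
uniformly over unit vectors $v$ and $\lambda \in \R$. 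This is exactly the definition in the footnote with $\sigma = \sqrt{\lambda_{\max}(\Sigma)}$, so the lemma holds with $C = 1$, and a fortiori with any $C \ge 1$, including $C \le \sqrt{2}$ as stated.

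There is no real obstacle; the only care needed is with degenerate covariances. If $\Sigma$ is singular, $X$ has no density, but the representation $X-\mu = \Sigma^{1/2}Z$ with the PSD square root still holds. Also, $Y$ may be the constant $0$, in which case its MGF equals $1 \le e^{\lambda^2\sigma^2/2}$, so the argument goes through unchanged.

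The universal constant $C$ in the statement only reflects other normalizations of sub-Gaussianity, for example $\psi_2$-norm definitions, where the constant changes. Under the MGF definition adopted in the paper, $C = 1$ is sharp: taking $v$ to be the top eigenvector of $\Sigma$ gives equality.
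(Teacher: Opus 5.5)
Your proposal is correct and follows the same route as the paper: project onto a unit vector $v$, note that $\inner{X-\mu}{v} \sim \N(0, v^\top \Sigma v)$, and bound $v^\top \Sigma v \le \lambda_{\max}(\Sigma)$. Where the paper only appeals to ``standard Gaussian concentration,'' you compute the Gaussian MGF explicitly, which shows that $C = 1$ suffices (and is sharp) under the footnote's MGF definition.
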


\begin{proof}
For any unit vector $v$, $\inner{X - \mu}{v} \sim \N(0, v^\top \Sigma v)$. Since $v^\top \Sigma v \leq \lambda_{\max}(\Sigma)$, the result follows from standard Gaussian concentration.
\end{proof}

\begin{lemma}[Multivariate Lipschitz Composition]\label{lem:lipschitz_comp_multi}
Let $\phi: \R^n \to \R^m$ be Lipschitz continuous with constant $L$, and let $\xi \sim \N(\mu, \Sigma)$ be a Gaussian random vector in $\R^n$. Then $\phi(\xi)$ is sub-Gaussian with parameter $\sigma \leq CL\sqrt{\lambda_{\max}(\Sigma)}$, where $C$ is a universal constant as in Lemma~\ref{lem:gaussian_vector}.
\end{lemma}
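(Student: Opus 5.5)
The plan is to reduce to the scalar case: fix a unit vector $v \in \R^m$ and study the real-valued map $\psi_v(z) := \inner{\phi(z)}{v}$. By Cauchy--Schwarz, $|\psi_v(z)-\psi_v(z')| \le \norm{\phi(z)-\phi(z')}\,\norm{v} \le L\norm{z-z'}$, so $\psi_v$ is $L$-Lipschitz. By the definition of a sub-Gaussian vector, it then suffices to show that for every such $v$ and every $\lambda\in\R$,
\[
\mathbb{E}\bigl[\exp\bigl(\lambda(\psi_v(\xi)-\mathbb{E}[\psi_v(\xi)])\bigr)\bigr] \le e^{\lambda^2 C^2 L^2 \lambda_{\max}(\Sigma)/2}.
\]
Integrability of $\psi_v(\xi)$ is immediate, since $|\psi_v(\xi)| \le |\psi_v(\mu)| + L\norm{\xi-\mu}$ and Gaussian norms have all moments.

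Next I would whiten the Gaussian. Write $\xi = \mu + \Sigma^{1/2} Z$ with $Z \sim \N(0, I_n)$, and set $\tilde\psi(w) := \psi_v(\mu + \Sigma^{1/2} w)$. Then
\[
|\tilde\psi(w)-\tilde\psi(w')| \le L\norm{\Sigma^{1/2}}_{\mathrm{op}}\norm{w-w'} = L\sqrt{\lambda_{\max}(\Sigma)}\,\norm{w-w'},
\]
so $\tilde\psi$ is Lipschitz with constant $L\sqrt{\lambda_{\max}(\Sigma)}$ with respect to a standard Gaussian. This step is where the factor $\sqrt{\lambda_{\max}(\Sigma)}$ enters, mirroring Lemma~\ref{lem:gaussian_vector}.

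The main step is the Gaussian concentration inequality for Lipschitz functions of a standard Gaussian. I would invoke the classical result (Tsirelson--Ibragimov--Sudakov, or Herbst's argument via the Gaussian log-Sobolev inequality): if $G$ is $K$-Lipschitz, then $\mathbb{E}[e^{\lambda(G(Z)-\mathbb{E}G(Z))}] \le e^{\lambda^2K^2/2}$, which gives $C=1$. If a self-contained route is preferred, the Maurey--Pisier interpolation argument applies. Take an independent copy $Z'$ and set $Z_\theta = Z\sin\theta + Z'\cos\theta$. By Jensen's inequality and the fundamental theorem of calculus,
\[
\mathbb{E}\, e^{\lambda(G(Z)-G(Z'))} \le \tfrac{2}{\pi}\int_0^{\pi/2}\mathbb{E}\, e^{\frac{\pi\lambda}{2}\inner{\nabla G(Z_\theta)}{\dot Z_\theta}}\,d\theta,
\]
where $\dot Z_\theta$ is standard Gaussian and independent of $Z_\theta$. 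Using $\norm{\nabla G}\le K$ (valid almost everywhere by Rademacher's theorem, or after smoothing by convolution), this gives the bound $e^{\pi^2\lambda^2K^2/8}$, i.e.\ $C=\pi/2$, together with Jensen's inequality to replace $G(Z')$ by $\mathbb{E}G$. Either way, $C$ is universal and can be absorbed into the $C$ of Lemma~\ref{lem:gaussian_vector}, possibly after enlarging it.

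The main obstacle is this concentration step, specifically the fact that $\phi$ is only Lipschitz rather than differentiable. I would handle it by mollification: apply the bound to $G_\delta = G * \rho_\delta$, which is smooth and still $K$-Lipschitz, and then let $\delta\to 0$ using dominated convergence, since $G_\delta \to G$ pointwise with linear growth. Taking the supremum over unit vectors $v$ then yields the claimed parameter $\sigma \le CL\sqrt{\lambda_{\max}(\Sigma)}$.
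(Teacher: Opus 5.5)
Your proposal is correct and follows the same route as the paper: fix a unit vector $v$, observe that $\psi_v = \inner{\phi(\cdot)}{v}$ is $L$-Lipschitz by Cauchy--Schwarz, apply Gaussian concentration for Lipschitz functions, and note that the bound is uniform in $v$. The only difference is that you spell out the whitening step giving the factor $\sqrt{\lambda_{\max}(\Sigma)}$, and the concentration argument itself (Herbst with $C=1$, or Maurey--Pisier with $C=\pi/2$, plus mollification), both of which the paper simply cites as standard.
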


\begin{proof}
For any unit vector $v \in \R^m$, consider $\psi(x) = \inner{\phi(x)}{v}$. This is a Lipschitz function from $\R^n$ to $\R$ with constant $L$ (since $|\psi(x_1) - \psi(x_2)| \leq \norm{\phi(x_1) - \phi(x_2)} \leq L\norm{x_1 - x_2}$). By standard Lipschitz-Gaussian concentration, $\psi(\xi)$ is sub-Gaussian with parameter $L\sqrt{\lambda_{\max}(\Sigma)}$. Since this holds for all unit vectors $v$, the result follows.
\end{proof}

These lemmas follow from standard Gaussian concentration results; see~\cite{vershynin2018high} for details. We now prove $\Delta h$ follows a sub-Gaussian distribution and its parameter is bounded. 

\begin{theorem}[Sub-Gaussianity of $\Delta h$]\label{thm:subgaussian_multi}
The random variable $\Delta h$ defined in \eqref{eq:main_rv} is sub-Gaussian with parameter
\begin{equation*}
\sigma_{\Delta h} \leq C\sqrt{L_{\Phi,x}^2\lambda_{\max}(\Sigma_x) + L_{\Phi,d}^2\lambda_{\max}(\Sigma_d)},
\end{equation*}
where $L_{\Phi,x} = L_h(L_f + L_g u_{\max} + |\gamma|)$ and $L_{\Phi,d} = L_h$.
\end{theorem}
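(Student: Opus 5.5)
Our plan is to treat $\Delta h$ as a Lipschitz function of a single Gaussian vector and then invoke Lemma~\ref{lem:lipschitz_comp_multi}, using a weighted norm so that the state and disturbance covariances enter separately.

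\textbf{Step 1: a single Lipschitz map.} Fix $u_t \in \mathcal{U}$ and define
\[
\Phi(x,d) := h\big(f(x) + g(x)u_t + d\big) - \gamma h(x),
\]
so that $\Delta h = \Phi(x_t,d_t)$. Under Assumption~\ref{ass:uncertainty}, the stacked vector $\xi = (x_t,d_t)$ is Gaussian with block-diagonal covariance $\mathrm{diag}(\Sigma_x,\Sigma_d)$.

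\textbf{Step 2: separate Lipschitz constants.} For two arguments $(x,d)$ and $(x',d')$, the triangle inequality and Assumption~\ref{ass:barrier} give
\[
|\Phi(x,d)-\Phi(x',d')| \le L_h\big(\norm{f(x)-f(x')} + \norm{g(x)-g(x')}_{\mathrm{op}}\norm{u_t}\big) + L_h\norm{d-d'} + |\gamma| L_h \norm{x-x'}.
\]
Using $\norm{u_t}\le u_{\max}$, and reading the Lipschitz constant of $g$ in operator norm, the right-hand side is at most $L_{\Phi,x}\norm{x-x'} + L_{\Phi,d}\norm{d-d'}$. This produces exactly the stated constants, since the term $|\gamma| L_h$ is absorbed into $L_{\Phi,x}$ and $L_{\Phi,d} = L_h$.

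\textbf{Step 3: whiten and apply the lemma.} This is the step that needs the most care. Applying Lemma~\ref{lem:lipschitz_comp_multi} directly with the joint Lipschitz constant $\sqrt{L_{\Phi,x}^2+L_{\Phi,d}^2}$ and $\lambda_{\max}$ of the joint covariance would give the cruder bound
\[
\sqrt{L_{\Phi,x}^2+L_{\Phi,d}^2}\,\sqrt{\max\big(\lambda_{\max}(\Sigma_x),\lambda_{\max}(\Sigma_d)\big)}.
\]
To avoid this, we write $x_t = \mu_x + \Sigma_x^{1/2} z_1$ and $d_t = \mu_d + \Sigma_d^{1/2} z_2$, where $z = (z_1,z_2)$ is standard normal, and set $\tilde\Phi(z) := \Phi(x_t,d_t)$. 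By Step 2,
\[
|\tilde\Phi(z)-\tilde\Phi(z')| \le a\norm{z_1-z_1'} + b\norm{z_2-z_2'},
\]
with $a = L_{\Phi,x}\sqrt{\lambda_{\max}(\Sigma_x)}$ and $b = L_{\Phi,d}\sqrt{\lambda_{\max}(\Sigma_d)}$. By Cauchy--Schwarz, this is at most $\sqrt{a^2+b^2}\,\norm{z-z'}$.

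\textbf{Step 4: conclude.} Lemma~\ref{lem:lipschitz_comp_multi} applied with identity covariance, so that $\lambda_{\max}=1$, shows that $\Delta h = \tilde\Phi(z)$ is sub-Gaussian with parameter at most $C\sqrt{a^2+b^2}$. This is the claimed bound. The bound is uniform in $u_t$ because only $u_{\max}$ was used.

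As an alternative route, one could condition on $x_t$, apply the lemma to $d_t$ alone, and multiply the moment generating functions. That requires handling the centering of the conditional mean, which is again Lipschitz in $x$ with constant $L_{\Phi,x}$. The whitening argument above avoids this bookkeeping.
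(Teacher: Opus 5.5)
Your proof is correct and follows the paper's route. Like the paper, you define $\Phi(x,d)$, establish the separate Lipschitz bound $L_{\Phi,x}\norm{x_1-x_2} + L_{\Phi,d}\norm{d_1-d_2}$, and then invoke Lemma~\ref{lem:lipschitz_comp_multi} on the jointly Gaussian pair $(x,d)$. Your whitening in Step~3 makes explicit what the paper's one-line appeal to the lemma leaves implicit: applied directly with the joint covariance, the lemma only gives the cruder $C\sqrt{(L_{\Phi,x}^2+L_{\Phi,d}^2)\max(\lambda_{\max}(\Sigma_x),\lambda_{\max}(\Sigma_d))}$ that you noted, and rescaling by $\Sigma_x^{1/2}$ and $\Sigma_d^{1/2}$ is what yields the stated bound.
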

\begin{proof}
Consider ${\Delta h \!=\! h(f(x) \!+\! g(x)u \!+\! d) \!-\! \gamma h(x)}$. Note that ${h(f(x) + g(x)u + d))}$ and $\gamma h(x)$ are \textit{not independent} (both depend on $x$), so we cannot simply use the sum of independent sub-Gaussians property. Define $\Phi: \R^{n_x} \times \R^{n_x} \to \R$ by $\Phi(x, d) = h(f(x) + g(x)u + d) - \gamma h(x)$, so that $\Delta h = \Phi(x,d)$. A straightforward calculation using the Lipschitz properties of $h$, $f$, and $g$ shows that
\begin{equation*}
|\Phi(x_1, d_1) - \Phi(x_2, d_2)| \leq L_{\Phi,x}\norm{x_1 - x_2} + L_{\Phi,d}\norm{d_1 - d_2}.
\end{equation*}
Lemma~\ref{lem:lipschitz_comp_multi} and the joint Gaussianity of $(x, d)$ then yield the stated bound.
\end{proof}

\subsection{Finite-Sample~\ac{CVaR} Approximation}
\label{sec:cvar_bound}
Having established the sub-Gaussian property of $\Delta h$, we now derive finite-sample bounds that relate the particle-based~\ac{CVaR} to the true~\ac{CVaR}. This enables replacing the intractable~\ac{CVaR} constraint in~\eqref{eq:main_problem} with a computable sample-based condition. We begin with a technical lemma bounding tail integrals for sub-Gaussian random variables.

\begin{lemma}[Sub-Gaussian Tail Integral Bound] \label{lem:tail_integral_bound}
Let $X$ be a sub-Gaussian random variable with mean $\mu$ and parameter $\bar{\sigma}^2$. Let $\epsilon \in (0, 0.5)$ be a probability threshold. Suppose $z$ is a threshold such that $\mathbb{P}(X > z) \leq \epsilon$. Then, the tail integral is bounded as:
\begin{equation*}
\mathbb{E}\left[(X - z)\mathbf{1}_{\{X > z\}}\right] = \int_z^{\infty} \mathbb{P}(X > y) \, dy \leq \frac{\bar{\sigma} \epsilon}{\sqrt{2\ln(1/\epsilon)}}.
\end{equation*}
\end{lemma}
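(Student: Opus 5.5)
The plan is to combine the layer-cake identity with the Chernoff tail bound implied by sub-Gaussianity, and then control the resulting Gaussian-type tail integral by a Mills-ratio estimate. The equality in the statement is the layer-cake formula applied to the nonnegative variable $(X-z)_+$:
\begin{equation*}
\mathbb{E}[(X-z)_+] = \int_0^\infty \mathbb{P}\big((X-z)_+ > s\big)\,ds = \int_z^\infty \mathbb{P}(X>y)\,dy .
\end{equation*}
This follows from Tonelli's theorem and needs no assumption beyond integrability. Sub-Gaussianity with parameter $\bar\sigma$ (the footnote definition in dimension one) gives, via Markov's inequality applied to $e^{\lambda(X-\mu)}$ and optimizing over $\lambda$, the Chernoff bound
\begin{equation*}
\mathbb{P}(X>y) \leq \exp\!\Big(-\frac{(y-\mu)^2}{2\bar\sigma^2}\Big), \qquad y\ge \mu .
\end{equation*}

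Next, set $t := z-\mu$ and use the elementary Mills-type estimate, valid for $t>0$:
\begin{equation*}
\int_t^\infty e^{-s^2/(2\bar\sigma^2)}\,ds \;\le\; \int_t^\infty \frac{s}{t}\,e^{-s^2/(2\bar\sigma^2)}\,ds \;=\; \frac{\bar\sigma^2}{t}\,e^{-t^2/(2\bar\sigma^2)} .
\end{equation*}
If $t \ge t_\epsilon := \bar\sigma\sqrt{2\ln(1/\epsilon)}$, then $e^{-t^2/(2\bar\sigma^2)} \le \epsilon$ and $\bar\sigma^2/t \le \bar\sigma/\sqrt{2\ln(1/\epsilon)}$. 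Multiplying these gives exactly the claimed bound $\bar\sigma\epsilon/\sqrt{2\ln(1/\epsilon)}$. The requirement $\epsilon<0.5$ ensures $\ln(1/\epsilon)>0$, so $t_\epsilon>0$ and the division is legitimate.

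\textbf{Main obstacle.} The difficulty is justifying $z-\mu \ge \bar\sigma\sqrt{2\ln(1/\epsilon)}$. The hypothesis $\mathbb{P}(X>z)\le\epsilon$ alone does not imply it, and I believe the inequality can genuinely fail under that hypothesis alone. Take $X\sim\N(0,1)$, $\epsilon=0.1$, and $z$ the exact $0.9$-quantile $\approx 1.2816$. Then
\begin{equation*}
\mathbb{E}[(X-z)_+] = \varphi(z)-z\epsilon \approx 0.0473,
\end{equation*}
which exceeds $\epsilon/\sqrt{2\ln 10}\approx 0.0466$. I would therefore prove the lemma under the reading actually used downstream: $z$ is (at least) the sub-Gaussian quantile bound $\mu+\bar\sigma\sqrt{2\ln(1/\epsilon)}$, which by the Chernoff bound automatically satisfies $\mathbb{P}(X>z)\le\epsilon$. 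I would state this strengthened hypothesis explicitly.

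If the original hypothesis must be kept, I would instead bound the integral by splitting at $t_\epsilon$:
\begin{equation*}
\int_z^{\mu+t_\epsilon}\epsilon\,dy \;+\; \frac{\bar\sigma\epsilon}{\sqrt{2\ln(1/\epsilon)}} .
\end{equation*}
The first term uses $\mathbb{P}(X>y)\le\epsilon$ for $y\ge z$, and the second is the Mills estimate above. This yields the stated bound plus an additive term $\epsilon\,(\mu+t_\epsilon-z)_+$, which vanishes precisely in the regime described above.
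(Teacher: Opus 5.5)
Your diagnosis is correct. The statement is false as written, and the flaw lies in the paper's lemma and proof, not in your argument. Your counterexample checks out: for $X\sim\N(0,1)$, $\epsilon=0.1$ and $z=\Phi^{-1}(0.9)\approx1.2816$, one has $\varphi(z)-0.1\,z\approx0.0473>0.0466\approx0.1/\sqrt{2\ln 10}$.

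The failure is not a matter of constants either. $X\sim\mathrm{Bernoulli}(p)$ is sub-Gaussian with $\bar\sigma=1/2$ by Hoeffding's lemma, and $\epsilon=z=p<1/2$ satisfies $\mathbb{P}(X>z)\le\epsilon$. Yet $\mathbb{E}[(X-z)_+]=p(1-p)$ exceeds the claimed bound by the factor $2(1-p)\sqrt{2\ln(1/p)}$. This is about $4.65$ at $p=0.05$ and is unbounded as $p\to0$.

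The paper's proof is exactly your strengthened-hypothesis computation: the Chernoff envelope $g(y)=e^{-(y-\mu)^2/(2\bar\sigma^2)}$, then Mills' ratio at $t^*=\sqrt{2\ln(1/\epsilon)}$. It breaks at the sentence asserting that $\int_z^\infty g(y)\,dy$ ``is maximized at $z=z^*$'', where $z^*=\mu+\bar\sigma\sqrt{2\ln(1/\epsilon)}$.

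Since this integral is decreasing in $z$, the claim holds only over $\{z\ge z^*\}$, i.e. over $z\ge\mu$ with $g(z)\le\epsilon$. The step silently replaces the hypothesis $\mathbb{P}(X>z)\le\epsilon$ by $g(z)\le\epsilon$. That does not follow, because $g$ is only an upper envelope of the tail. The preceding inequality $\int_z^\infty\mathbb{P}(X>y)\,dy\le\int_z^\infty g(y)\,dy$ also tacitly requires $z\ge\mu$, which you avoid by applying the Chernoff bound only for $y\ge\mu$. Under $z\ge z^*$, your proof and the paper's coincide.

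One correction to your proposal: the strengthened hypothesis is not what is available downstream. In Theorem~\ref{thm:optimal_subgaussian_cvar_mubar} the lemma is applied with $z=Z_n$ (the sample maximum) and $\epsilon=\varepsilon_n(\delta)$. The only information used there is $\mathbb{P}(X>Z_n)\le\varepsilon_n(\delta)$ on the DKW event, which is precisely the insufficient hypothesis. So the term $C_{\text{tail}}$ there is not justified as stated.

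Your split bound $\bar\sigma\epsilon/\sqrt{2\ln(1/\epsilon)}+\epsilon\,(\mu+\bar\sigma\sqrt{2\ln(1/\epsilon)}-z)_+$ is the valid replacement, and it needs no assumption $z\ge\mu$. However, its extra term involves the unknown mean of $\Delta h$. A computable certificate would therefore also need a high-probability upper bound on $\mu$, e.g.\ the sample mean plus $\bar\sigma\sqrt{2\ln(1/\delta')/n}$, at the cost of an additional $\delta'$ in the confidence level.
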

\begin{proof}
Let $g(y) := \exp\left(-\frac{(y - \mu)^2}{2\bar{\sigma}^2}\right)$. By the sub-Gaussian assumption, we have:
\begin{equation*}
\int_z^{\infty} \mathbb{P}(X > y) \, dy \leq \int_z^{\infty} g(y) \, dy.
\end{equation*}

To obtain the worst-case bound, we consider the threshold $z^*$ that saturates the constraint:
\begin{align*}
g(z^*) = \epsilon \iff z^* = \mu + \bar{\sigma}\sqrt{2\ln(1/\epsilon)}.
\end{align*}
Since $\epsilon < 0.5$, we have $z^* > \mu$ and the integral is maximized at $z = z^*$. With the change of variable $t = (y - \mu)/\bar{\sigma}$, where $t^* = \sqrt{2\ln(1/\epsilon)} > 0$:
\begin{equation*}
\int_{z^*}^{\infty} g(y) \, dy = \bar{\sigma} \int_{t^*}^{\infty} e^{-t^2/2} \, dt.
\end{equation*}
Applying Mills' inequality ${\int_t^{\infty} \! e^{-u^2/2} \, du \!\leq\! \frac{1}{t}e^{-t^2/2}}$ for ${t \!>\! 0}$:
\begin{equation*}
\bar{\sigma} \int_{t^*}^{\infty} e^{-t^2/2} \, dt \leq \bar{\sigma} \cdot \frac{e^{-(t^*)^2/2}}{t^*} = \bar{\sigma} \cdot \frac{\epsilon}{\sqrt{2\ln(1/\epsilon)}},
\end{equation*}
where we used $e^{-(t^*)^2/2} = \epsilon$.
\end{proof}

With this tail bound established, we can now state our main result: a finite-sample upper bound on~\ac{CVaR} expressed in terms of order statistics. While the theorem involves order statistics for the analysis, the practical implementation in Section~\ref{sec:tractable} avoids computing them explicitly.

\begin{theorem}[Data-Dependent Sub-Gaussian~\ac{CVaR} Bound]\label{thm:optimal_subgaussian_cvar_mubar}
Let $X_1, \ldots, X_n$ be i.i.d.\ sub-Gaussian random variables with mean $\mu$ and sub-Gaussian parameter $\sigma^2 > 0$. 
Let $Z_1 \leq \cdots \leq Z_n$ denote the order statistics of $X_1, \ldots, X_n$. Then for any $\alpha \in (0,1)$ and $\delta \in (0, 0.5]$ with $\varepsilon_n(\delta) < 0.5$\footnote{This condition is satisfied when $n > 2\ln(2/\delta)$.},
\begin{align*}
&\mathbb{P}\bigg[\CVaR_\alpha(X) \leq Z_n + C_{\emph{tail}}(n,\delta) - \\
&\quad \frac{1}{\alpha}\sum_{i=1}^{n-1}(Z_{i+1} - Z_i)\left(\frac{i}{n} - \varepsilon_n(\delta) - (1-\alpha)\right)^+\bigg] \geq 1-\delta,
\end{align*}
where $\varepsilon_n(\delta) := \sqrt{\frac{\ln(2/\delta)}{2n}}$ and $C_{\emph{tail}}(n, \delta) := 
\frac{\bar{\sigma}\, \varepsilon_n(\delta)}{\alpha\sqrt{2\ln(1/\varepsilon_n(\delta))}}$
\end{theorem}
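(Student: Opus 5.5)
We will use the Rockafellar--Uryasev form $\CVaR_\alpha(X)=\inf_{z}\{z+\tfrac{1}{\alpha}\mathbb{E}[(X-z)^+]\}$, where $\alpha$ is the tail mass. Since the statement is an upper bound, it is enough to pick one threshold $z$ and bound $z+\tfrac1\alpha\mathbb{E}[(X-z)^+]$. We take $z=Z_n$, the sample maximum, and write
\begin{equation*}
\mathbb{E}[(X-Z_n)^+]=\int_{-\infty}^{\infty}(1-F(y))\mathbf{1}_{\{y\ge Z_n\}}\,dy .
\end{equation*}
The idea is to split the tail integral $\int_{z}^\infty(1-F(y))\,dy$ into two parts. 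The part beyond the sample range is handled by Lemma~\ref{lem:tail_integral_bound}. The part inside the sample range is handled by a DKW band around the empirical CDF. A cleaner way to organise this is the identity
\begin{equation*}
z+\tfrac1\alpha\mathbb{E}[(X-z)^+]=Z_n-\tfrac1\alpha\int_{z}^{Z_n}\big(F(y)-(1-\alpha)\big)\,dy+\tfrac1\alpha\int_{Z_n}^\infty(1-F(y))\,dy .
\end{equation*}
It follows from $\mathbb{E}[(X-z)^+]=\int_z^\infty(1-F)$ by adding and subtracting $\tfrac1\alpha\int_z^{Z_n}\alpha\,dy$. We then optimise over $z\le Z_n$, taking $z=Z_1$ and using positive parts.

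\textbf{Step 1 (high-probability event).} By the DKW inequality with Massart's constant, with probability at least $1-\delta$ we have $\sup_y|F(y)-\hat F_n(y)|\le\varepsilon_n(\delta)$. On this event $F(y)\ge\hat F_n(y)-\varepsilon_n(\delta)$ for every $y$. In particular, for $y\in[Z_i,Z_{i+1})$ we have $F(y)\ge i/n-\varepsilon_n(\delta)$. We also have $1-F(Z_n)\le\varepsilon_n(\delta)$, because $\hat F_n(Z_n)=1$.

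\textbf{Step 2 (inside the range).} The function $y\mapsto -(F(y)-(1-\alpha))$ is to be integrated over $[z,Z_n]$, and we are free to choose $z$. Choosing $z$ as the smallest point from which $F-(1-\alpha)$ stays nonnegative bounds the middle term by
\begin{equation*}
-\tfrac1\alpha\int_{Z_1}^{Z_n}\big(F(y)-(1-\alpha)\big)^+dy .
\end{equation*}
One way to justify this is to note that the infimum over $z$ picks out exactly the positive part. On the event, replacing $F$ piecewise by $i/n-\varepsilon_n$ and using monotonicity of $(\cdot)^+$ gives
\begin{equation*}
-\tfrac1\alpha\sum_{i=1}^{n-1}(Z_{i+1}-Z_i)\big(\tfrac in-\varepsilon_n(\delta)-(1-\alpha)\big)^+ .
\end{equation*}
We will check the sign conventions carefully here. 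In particular, the infimum over $z$ must be taken before substituting the empirical lower bound. Both operations are monotone in the right direction, so this ordering is valid.

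\textbf{Step 3 (outside the range).} For the last term we apply Lemma~\ref{lem:tail_integral_bound} with threshold $Z_n$ and $\epsilon=\varepsilon_n(\delta)<0.5$. This is allowed because $\mathbb{P}(X>Z_n)\le\varepsilon_n(\delta)$ on the event, conditionally on the sample. This gives $\int_{Z_n}^\infty(1-F)\le\bar\sigma\varepsilon_n/\sqrt{2\ln(1/\varepsilon_n)}$, and dividing by $\alpha$ yields exactly $C_{\mathrm{tail}}$.

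\textbf{Main obstacle.} The delicate point is Step 3. Lemma~\ref{lem:tail_integral_bound} requires the threshold to satisfy $\mathbb{P}(X>z)\le\epsilon$ with $z$ fixed, but here $Z_n$ is random. We will apply the lemma pointwise on the DKW event, treating $Z_n$ as fixed given the sample. This works because the lemma's bound holds for every deterministic $z$ meeting the tail condition and the tail function $1-F$ is deterministic. The lemma's worst-case argument also implicitly needs $z\ge\mu$ or the saturation reasoning; this follows from $\varepsilon_n<0.5$, which forces $F(Z_n)>1/2$, together with the sub-Gaussian tail.
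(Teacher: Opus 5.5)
\textbf{Overall.} Your argument is the paper's proof, but it shares the paper's gap in Lemma~\ref{lem:tail_integral_bound}, and your last paragraph's attempt to close that gap does not work.

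\textbf{Comparison with the paper.} You use the same ingredients:
\begin{itemize}
\item the Massart--DKW event;
\item Lemma~\ref{lem:tail_integral_bound} applied at $z=Z_n$ for the mass beyond the sample maximum;
\item the lower envelope $\hat F_n-\varepsilon_n(\delta)$ inside the positive-part integral, evaluated piecewise.
\end{itemize}
The only difference is the order of operations. You first minimise the Rockafellar--Uryasev objective with the true $F$, e.g.\ $z=\min\{F^{-1}(1-\alpha),Z_n\}$ as in your Step~2. You then use $(F-(1-\alpha))^+\ge(\hat F_n-\varepsilon_n-(1-\alpha))^+$. The paper instead replaces $1-F$ by $1-G_\omega^-$ for every $c$ first, and then evaluates the infimum via Lemma~1 of~\cite{Thomas2019}. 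Both orders are valid, so your ``must'' is too strong.

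Drop the overview's ``taking $z=Z_1$.'' That choice gives $-\frac1\alpha\int_{Z_1}^{Z_n}(F-(1-\alpha))\,dy$, which is \emph{larger} than the positive-part expression, so it goes the wrong way.

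Your remark that $Z_n$ may be treated as fixed on the DKW event, because $1-F$ is deterministic, is correct. The paper leaves this implicit.

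\textbf{The gap in Step~3.} Your closing paragraph identifies a real weakness but resolves it incorrectly.
\begin{itemize}
\item $F(Z_n)>1/2$ only places $Z_n$ above a median, not above $\mu$.
\item Even $z\ge\mu$ would not suffice. The ``saturation'' step of Lemma~\ref{lem:tail_integral_bound} needs $z\ge z^*=\mu+\bar\sigma\sqrt{2\ln(1/\epsilon)}$.
\item The hypothesis $\mathbb{P}(X>z)\le\epsilon$ does not imply this. The sub-Gaussian envelope $g$ only \emph{upper}-bounds the tail, so $z$ can lie below $z^*$.
\end{itemize}
Concretely, take $X\sim\N(0,1)$, $\bar\sigma=1$, $\epsilon=0.1$, and $z\approx 1.2816$, its $0.9$-quantile. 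Then $\mathbb{E}[(X-z)^+]\approx 0.0473>0.1/\sqrt{2\ln 10}\approx 0.0466$, so the lemma's bound fails. For two-point sub-Gaussian laws the ratio of true tail integral to claimed bound grows like $2\ln(1/\epsilon)$.

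This is exactly the regime that matters here: on the DKW event you only know $1-F(Z_n)\le\varepsilon_n$. So your Step~3 is only as sound as the paper's Lemma~\ref{lem:tail_integral_bound}, and inherits its gap.

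\textbf{A valid substitute.} Bound $\mathbb{P}(X>y)$ by $\epsilon$ on $[z,z^*]$ and by $g(y)$ beyond $z^*$. Combine this with $z\ge\mu-\bar\sigma\sqrt{2\ln(1/(1-\epsilon))}$, which follows from the lower sub-Gaussian tail and $\mathbb{P}(X\le z)\ge1-\epsilon$. This gives a tail term of order $\bar\sigma\epsilon\sqrt{\ln(1/\epsilon)}$, rather than the stated $\bar\sigma\epsilon/\sqrt{\ln(1/\epsilon)}$, and hence a larger $C_{\mathrm{tail}}$.
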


\begin{proof}
Let $F$ and $F_\omega$ denote the true and empirical~\acp{CDF}, respectively. By the~\ac{DKW} inequality with Massart's tight constants~\cite{Massart10},
\begin{equation*}
\mathbb{P}\left(\sup_{x \in \mathbb{R}} (F_\omega(x) - F(x)) \leq \varepsilon_n(\delta) \right) \geq 1 - \delta.
\end{equation*}
Define the event $\mathcal{E}_{\text{DKW}} := \{\sup_{x}(F_\omega(x) - F(x)) \leq \varepsilon_n(\delta)\}$. On this event, since $F_\omega(Z_n) = 1$, we have $F(Z_n) \geq 1 - \varepsilon_n(\delta)$, and hence $\mathbb{P}(X > Z_n) \leq \varepsilon_n(\delta)$. Applying Lemma~\ref{lem:tail_integral_bound} with $z = Z_n$ then yields
\begin{equation*}
I_{\text{tail}} := \int_{Z_n}^{\infty} (1 - F(y))\, dy \leq \frac{\bar{\sigma}\, \varepsilon_n(\delta)}{\sqrt{2\ln(1/\varepsilon_n(\delta))}} =: \bar{I}_{\text{tail}}.
\end{equation*}

Now define the lower bound~\ac{CDF} $G_\omega^-(x) := \max\{0, F_\omega(x) - \varepsilon_n(\delta)\}$, which satisfies $G_\omega^-(x) \leq F(x)$ for all $x$ on $\mathcal{E}_{\text{DKW}}$. Using the integral representation of~\ac{CVaR}:
\begin{align*}
\CVaR_\alpha(X) \leq \inf_{c \in \mathbb{R}}\left\{c + \frac{1}{\alpha}\left(\int_c^{Z_n}(1 - G_\omega^-(y))dy + I_{\text{tail}}\right)\right\}.
\end{align*}
Applying Lemma~1 in~\cite{Thomas2019}, this yields
\begin{equation*}
\CVaR_\alpha(X) \leq Z_n - \frac{1}{\alpha}\int_{-\infty}^{Z_n}(G_\omega^-(y) - (1-\alpha))^+ dy + \frac{I_{\text{tail}}}{\alpha}.
\end{equation*}
$F_\omega$ is piecewise constant with ${F_\omega(x) \!= \!i/n, \,\forall\, x \!\in \!(Z_i, Z_{i+1}]}$, and so ${G_\omega^-(x) \!= \!\max\{0, i/n \!-\! \varepsilon_n(\delta)\}}$ on each interval. Thus:
\begin{align*}
&\int_{-\infty}^{Z_n}(G_\omega^-(y) - (1-\alpha))^+ dy \\ 
&= \sum_{i=1}^{n-1}(Z_{i+1} - Z_i)\left(\frac{i}{n} - \varepsilon_n(\delta) - (1-\alpha)\right)^+ .
\end{align*}
Setting $C_{\text{tail}}(n,\delta) := \bar{I}_{\text{tail}}/\alpha$ completes the proof.
\end{proof}

\begin{remark} $\bar{I}_{\emph{tail}}$ decays as $O((n\ln n)^{-1/2})$.
\end{remark}

\begin{remark}[A Posteriori Verification]\label{rem:a-posteriori-verification}
Theorem~\ref{thm:optimal_subgaussian_cvar_mubar} assumes i.i.d.\ samples, which may be violated when the same particles are used for both optimization and evaluation.
In our experiments, we evaluate with new samples. Although no verification failures were observed in our case, standard remedies include resampling or tightening the risk level~\cite{zhao2025}.
\end{remark}

Alternatively, when the barrier function increment admits a separable structure, the i.i.d.\ assumption holds directly without requiring sample splitting as detailed in Remark~\ref{rem:a-posteriori-verification}.

\begin{corollary}[Separable Structure]\label{cor:separable}
Suppose $\Delta h = a(u_t) + b(x_t, d_t)$ for some functions $a$ and $b$. Let $\{(x_t^i, d_t^i)\}_{i=1}^n$ be i.i.d.\ samples and define $Z_1 \leq \cdots \leq Z_n$ as the order statistics of $\{b(x_t^i, d_t^i)\}_{i=1}^n$. 
Then any $u_t$ satisfying
\begin{eqnarray*}
    a(u_t) + Z_n + C_{\emph{tail}}(n,\delta) - \\
\frac{1}{\alpha}\sum_{i=1}^{n-1}(Z_{i+1} - Z_i)\left(\frac{i}{n} - \varepsilon_n(\delta) - (1-\alpha)\right)^+ \leq 0,
\end{eqnarray*}
guarantees $\CVaR_\alpha(\Delta h) \leq 0$ with probability at least $1-\delta$, by the translation invariance of $\CVaR$.
\end{corollary}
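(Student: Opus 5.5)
The plan is to apply Theorem~\ref{thm:optimal_subgaussian_cvar_mubar} to the random variable $B := b(x_t, d_t)$ rather than to $\Delta h$ itself. Then $u_t$ enters only through the deterministic shift $a(u_t)$, so the samples used in the bound do not depend on the decision variable.

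\textbf{Step 1: the samples are i.i.d.\ and independent of $u_t$.} Since $\{(x_t^i,d_t^i)\}_{i=1}^n$ are i.i.d.\ draws from the joint law of $(x_t,d_t)$ in Assumption~\ref{ass:uncertainty}, the values $B_i := b(x_t^i,d_t^i)$ are i.i.d.\ copies of $B$. Their law does not involve $u_t$. Hence the order statistics $Z_1\le\cdots\le Z_n$ are a legitimate i.i.d.\ sample in the sense of Theorem~\ref{thm:optimal_subgaussian_cvar_mubar}, even if $u_t$ is later chosen as a function of them. This is exactly the point of Remark~\ref{rem:a-posteriori-verification}.

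\textbf{Step 2: $B$ is sub-Gaussian.} For any fixed $u_t$, Theorem~\ref{thm:subgaussian_multi} shows that $\Delta h = a(u_t) + B$ is sub-Gaussian with parameter at most $\sigma_{\Delta h}$. Subtracting the constant $a(u_t)$ changes neither the centered variable $B-\mathbb{E}[B]$ nor the moment generating function bound. So $B$ is sub-Gaussian with the same parameter, which we take as $\bar\sigma$ in $C_{\text{tail}}$.

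\textbf{Step 3: transfer the bound to $\Delta h$.} Applying Theorem~\ref{thm:optimal_subgaussian_cvar_mubar} to $B_1,\dots,B_n$ gives an event $\mathcal{E}$ with $\mathbb{P}(\mathcal{E})\ge 1-\delta$ on which
\begin{equation*}
\CVaR_\alpha(B) \le Z_n + C_{\text{tail}}(n,\delta) - \frac{1}{\alpha}\sum_{i=1}^{n-1}(Z_{i+1}-Z_i)\Big(\frac{i}{n}-\varepsilon_n(\delta)-(1-\alpha)\Big)^+ .
\end{equation*}
The event $\mathcal{E}$ depends only on the samples, so it holds simultaneously for every $u_t$. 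By translation invariance of CVaR, $\CVaR_\alpha(a(u_t)+B) = a(u_t) + \CVaR_\alpha(B)$ for each fixed $u_t$. On $\mathcal{E}$, any $u_t$ satisfying the displayed inequality of the corollary therefore has $\CVaR_\alpha(\Delta h)\le 0$. This includes a $u_t$ selected after observing the samples.

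\textbf{Main obstacle.} The only delicate point is Step 3's uniformity. The probability statement must be made about the event $\mathcal{E}$ before $u_t$ is chosen, not about a $u_t$-dependent quantity. Otherwise one runs into the sample-reuse issue flagged in Remark~\ref{rem:a-posteriori-verification}. Because $a(u_t)$ is deterministic given $u_t$, the inequality holds for all $u_t$ on $\mathcal{E}$, and the argument is complete.
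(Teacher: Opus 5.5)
Your proposal is correct and matches the paper's argument: apply Theorem~\ref{thm:optimal_subgaussian_cvar_mubar} to the i.i.d.\ samples $b(x_t^i,d_t^i)$, which do not depend on $u_t$, and then shift by the deterministic term $a(u_t)$ using translation invariance of $\CVaR$. Your extra checks make explicit what the paper leaves to its one-line justification and the sentence before the corollary: that $b(x_t,d_t)$ inherits the sub-Gaussian parameter from Theorem~\ref{thm:subgaussian_multi}, and that the high-probability event is independent of $u_t$, so the conclusion holds even for a $u_t$ chosen after seeing the particles.
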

    
We note that for the control-affine dynamics~\eqref{eq:dynamics}, this separable structure holds when $h$ is linear and $g$ is constant. It arises naturally in linear systems or fully-actuated mechanical systems with position-based safety constraints.

\begin{remark}[Safety Over a Finite Time Horizon Provided by Theorem~\ref{thm:optimal_subgaussian_cvar_mubar}]
Suppose Theorem~\ref{thm:optimal_subgaussian_cvar_mubar} holds 
with confidence $1-\delta$ at each $t \in [H]$ using independent particles.
Let $q_0 := \mathbb{P}(h(x_0) \leq 0)$ denote the probability that 
the true initial state lies in $\mathcal{C}$ under 
Assumption~\ref{ass:uncertainty}. Then, setting $\alpha \leq \epsilon/H$, 
an analogue of Proposition~6 of~\cite{mestres2025} 
holds under state uncertainty:
\begin{equation}
    \mathbb{P}_{\mathrm{ptcl}}\!\left(
    \mathbb{P}\!\left(\bigcap_{t=0}^{H}\{h(x_t)\leq 0\}\right)
    \geq q_0 - \epsilon
    \right) \geq 1 - H\delta,
\end{equation}
where $\mathbb{P}_{\mathrm{ptcl}}$ denotes the probability 
over particle randomness. 
\end{remark}

\subsection{\ac{CVaR} Reformulation as a Linear Program}
\label{sec:tractable}

Theorem~\ref{thm:optimal_subgaussian_cvar_mubar} provides a high-probability upper bound on $\CVaR_\alpha(\Delta h)$ involving order statistics and correction terms. While theoretically powerful, this form is not immediately suitable for optimization: it requires explicit sorting and the bound's structure is not amenable to standard solvers. We resolve this by reformulating the bound as a linear program that avoids explicit computation of order statistics. We recall a classical result that expresses sample~\ac{CVaR} as an optimization problem over unordered samples~\cite{rockafellar2000optimization}.

\begin{theorem}[Sample~\ac{CVaR} Formula~\cite{rockafellar2000optimization}]
\label{thm:cvar_original}
Let $\{w_1, \ldots, w_n\}$ be $n$ real numbers with order statistics $w_{(1)} \leq \cdots \leq w_{(n)}$, and let $\alpha \in (0,1)$. Then
\begin{align*}    
&\inf_{\theta \in \mathbb{R}} \left\{ \theta + \frac{1}{n\alpha} \sum_{i=1}^{n} (w_i - \theta)_+ \right\} \\
&= w_{(n)} - \frac{1}{\alpha} \sum_{j=1}^{n-1} (w_{(j+1)} - w_{(j)}) \left( \frac{j}{n} - (1-\alpha) \right)_+.
\end{align*}
\end{theorem}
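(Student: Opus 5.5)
The plan is to analyze the objective $\phi(\theta) := \theta + \frac{1}{n\alpha}\sum_{i=1}^n (w_i-\theta)_+$ as a convex, piecewise-linear function of $\theta$. First we find its minimizer, then we evaluate $\phi$ there and rewrite the value as a telescoping sum over consecutive gaps $w_{(j+1)}-w_{(j)}$. Since the sum is permutation-invariant, we may work with the sorted values $w_{(1)}\le\cdots\le w_{(n)}$ throughout.

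\textbf{Step 1 (minimizer).} For $\theta\in(w_{(j)},w_{(j+1)})$, exactly $n-j$ terms are active. The slope of $\phi$ there is $1-\frac{n-j}{n\alpha}$, which is nondecreasing in $j$. The slope is $-\infty$-like (negative) for $\theta<w_{(1)}$ and equals $1$ for $\theta>w_{(n)}$. Hence $\phi$ is coercive and convex, and the infimum is attained at an order statistic $w_{(k)}$, namely the first index where the slope becomes nonnegative, i.e. $j/n\ge 1-\alpha$.

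We do not actually need to identify $k$ precisely. It suffices to show that the right-hand side equals $\min_j \phi(w_{(j)})$, and we will get this more cleanly by a direct identity.

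\textbf{Step 2 (value as an integral).} We write $(w_i-\theta)_+=\int_\theta^\infty \mathbf 1\{w_i>y\}\,dy$, so that
\begin{equation*}
\phi(\theta)=\theta+\frac1\alpha\int_\theta^\infty (1-F_n(y))\,dy ,
\end{equation*}
where $F_n$ is the empirical CDF. For $\theta\le w_{(n)}$ we split at $w_{(n)}$ and use $\int_\theta^{w_{(n)}}1\,dy = w_{(n)}-\theta$. This gives
\begin{equation*}
\phi(\theta)=w_{(n)}-\int_\theta^{w_{(n)}}\Big(\frac{F_n(y)-(1-\alpha)}{\alpha}\Big)dy .
\end{equation*}
Minimizing over $\theta$ amounts to integrating only over the region where the integrand is positive. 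Indeed, $F_n$ is nondecreasing, so $\{y: F_n(y)>1-\alpha\}$ is an up-interval. The optimal $\theta$ is its left endpoint, and this yields
\begin{equation*}
\inf_\theta\phi=w_{(n)}-\frac1\alpha\int_{-\infty}^{w_{(n)}}(F_n(y)-(1-\alpha))_+\,dy .
\end{equation*}
This is exactly the structure used via Lemma~1 of~\cite{Thomas2019} in the proof of Theorem~\ref{thm:optimal_subgaussian_cvar_mubar}. We should also check that $\theta>w_{(n)}$ gives $\phi(\theta)=\theta>w_{(n)}$ and is therefore never better.

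\textbf{Step 3 (sum form).} Since $F_n(y)=j/n$ on $(w_{(j)},w_{(j+1)}]$ and $F_n=0$ below $w_{(1)}$, the integral becomes $\sum_{j=1}^{n-1}(w_{(j+1)}-w_{(j)})(j/n-(1-\alpha))_+$. This is the claim.

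The main obstacle is Step 2, specifically justifying rigorously that the infimum over $\theta$ corresponds to dropping the negative part of the integrand. The argument uses monotonicity of $F_n$: for $\theta$ left of the crossing point, extra negative contributions are added, and for $\theta$ to the right, positive contributions are lost. Ties among the $w_i$ are harmless because they produce zero-length gaps.
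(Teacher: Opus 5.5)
Your proof is correct, and it is more self-contained than what the paper provides. The paper gives no argument for this statement and simply attributes it to Rockafellar--Uryasev. Read literally, that citation supplies only the variational identity $\inf_\theta\{\theta+\frac1\alpha\mathbb{E}[(X-\theta)_+]\}=\CVaR_\alpha(X)$. The original reference works under a continuity assumption on the loss distribution, so an atomic empirical distribution needs the general-distribution version. One would then still need a separate computation identifying the CVaR of the empirical distribution with the order-statistic sum.

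Your route bypasses CVaR entirely. The layer-cake identity $(w_i-\theta)_+=\int_\theta^\infty\mathbf{1}\{w_i>y\}\,dy$ turns the objective into $w_{(n)}-\frac1\alpha\int_\theta^{w_{(n)}}(F_n(y)-(1-\alpha))\,dy$ for $\theta\le w_{(n)}$. The infimum is then evaluated directly, and atoms and ties cause no trouble. As you note, this is the same computation the paper carries out inside the proof of Theorem~\ref{thm:optimal_subgaussian_cvar_mubar}: integral representation, Lemma~1 of~\cite{Thomas2019}, and piecewise-constant evaluation of the CDF. Your argument therefore shows that the statement is just the case $\varepsilon_n(\delta)=0$, with no tail term, of the paper's own computation.

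The step you flag as the main obstacle is a one-liner. Set $g:=F_n-(1-\alpha)$. Every $\theta\le w_{(n)}$ satisfies $\int_\theta^{w_{(n)}}g\le\int_{-\infty}^{w_{(n)}}g_+$. Since $g$ is nondecreasing, with $g=-(1-\alpha)<0$ below $w_{(1)}$ and $g(w_{(n)})=\alpha>0$, equality holds when $\theta$ is the left endpoint of $\{g>0\}$.

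Step~1 is never used and can be dropped; in any case its slope for $\theta<w_{(1)}$ is $1-1/\alpha<0$, not ``$-\infty$-like''. Finally, with your definition $F_n(y)=\frac1n\#\{i:w_i\le y\}$ the empirical CDF is right-continuous. So $F_n=j/n$ on $[w_{(j)},w_{(j+1)})$ rather than on $(w_{(j)},w_{(j+1)}]$; the endpoints form a null set, so the integral is unaffected.
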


Crucially, the left-hand side depends only on the unordered samples $\{w_i\}$ and can be computed via optimization, while the right-hand side involves order statistics. This equivalence enables computation without explicit sorting. To accommodate the correction term $\varepsilon_n(\delta)$ from Theorem~\ref{thm:optimal_subgaussian_cvar_mubar}, we extend this classical formula. Let $B \in (0, \alpha)$ be a shift in the risk threshold. Then we have the following result.

\begin{corollary}[Shifted Threshold~\ac{CVaR} Formula]
\label{cor:cvar_shifted}
Under the setting of Theorem~\ref{thm:cvar_original}, let $B \in (0, \alpha)$. Then
\begin{align*}
\frac{B}{\alpha} w_{(n)} + \frac{\alpha - B}{\alpha} \inf_{\theta \in \mathbb{R}} \left\{ \theta + \frac{1}{n(\alpha - B)} \sum_{i=1}^{n} (w_i - \theta)_+ \right\} \\ 
= w_{(n)} - \frac{1}{\alpha} \sum_{j=1}^{n-1} (w_{(j+1)} - w_{(j)}) \left( \frac{j}{n} - B - (1-\alpha) \right)_+.    
\end{align*}
\end{corollary}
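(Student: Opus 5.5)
Since $0 < \alpha - B < 1$, we can apply Theorem~\ref{thm:cvar_original} directly with $\alpha' := \alpha - B \in (0,1)$ in place of $\alpha$. This rewrites the infimum on the left-hand side as
\begin{equation*}
w_{(n)} - \frac{1}{\alpha - B}\sum_{j=1}^{n-1}(w_{(j+1)} - w_{(j)})\left(\frac{j}{n} - (1-\alpha')\right)_+ .
\end{equation*}
Since $1 - \alpha' = 1 - \alpha + B$, the positive-part term is $\left(\frac{j}{n} - B - (1-\alpha)\right)_+$, which is exactly the shifted term on the right-hand side of the corollary.

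Next we multiply this expression by $\frac{\alpha - B}{\alpha}$. The factor $\frac{1}{\alpha - B}$ in front of the sum cancels and becomes $\frac{1}{\alpha}$. The leading term becomes $\frac{\alpha - B}{\alpha} w_{(n)}$.

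Adding $\frac{B}{\alpha} w_{(n)}$ then combines the two $w_{(n)}$ coefficients into $\frac{B + \alpha - B}{\alpha} = 1$. The result is $w_{(n)} - \frac{1}{\alpha}\sum_{j}(\cdots)_+$, which is the claimed right-hand side.

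The argument is a direct substitution, and the only step needing care is the hypothesis check. Theorem~\ref{thm:cvar_original} requires its risk level to lie in $(0,1)$, and $B \in (0,\alpha)$ with $\alpha < 1$ gives $\alpha - B \in (0,1)$. The infimum is finite and attained because $\alpha' < 1$, so no degenerate case arises. Beyond this, we only need to confirm that the shift enters consistently through $1 - \alpha'$ in the positive part.
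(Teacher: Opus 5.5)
Your proposal is correct and takes the same route as the paper. You apply Theorem~\ref{thm:cvar_original} at risk level $\alpha - B \in (0,1)$, multiply by $\frac{\alpha-B}{\alpha}$, and add $\frac{B}{\alpha}w_{(n)}$ so the $w_{(n)}$ coefficients sum to one, which spells out the bookkeeping (including $1-(\alpha-B) = 1-\alpha+B$ in the positive part) that the paper's one-line proof leaves implicit.
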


\begin{proof}
Factor $w_{(n)}$ from the right-hand side and apply Theorem~\ref{thm:cvar_original} with risk level $\alpha - B$.
\end{proof}

Setting $B = \varepsilon_n(\delta)$ yields the tractable constraint for our probabilistic~\ac{CBF} formulation. 
The optimization problems in Theorem~\ref{thm:cvar_original} and Corollary~\ref{cor:cvar_shifted} allow efficient linear programming reformulations. The constraint
\begin{equation}\label{eq:cvar_constraint_abstract}
\frac{B}{\alpha} w_{(n)} + \frac{\alpha - B}{\alpha} \inf_{\theta} \left\{ \theta + \frac{1}{n(\alpha - B)} \sum_{i=1}^{n} (w_i - \theta)_+ \right\} \leq 0,
\end{equation}
holds iff there exist $\theta \in \mathbb{R}$ and $\{s_i \geq 0\}_{i=1}^n$ satisfying
\begin{flalign}\label{eq:lp_constraint}
\begin{aligned}
\frac{B}{\alpha} w_i \!+\! \frac{\alpha-B}{\alpha} \left( \theta \!+\! \frac{1}{n(\alpha-B)} \sum_{j=1}^n s_j \right) \leq 0, \, i \in [n], \\
s_i \geq w_i \!-\! \theta, \quad s_i \geq 0, \, i \in [n].
\end{aligned}
\end{flalign}

The constraint must hold for all samples including the maximum, yielding the first inequality. The auxiliary variables $s_i$ capture the positive parts $(w_i - \theta)_+$. 

\subsection{Tractable Probabilistic~\ac{CBF} Optimization}
\label{sec:algorithm}

We now apply the reformulation from Section~\ref{sec:tractable} to construct a computationally efficient version of~\eqref{eq:main_problem}. Theorem~\ref{thm:optimal_subgaussian_cvar_mubar} guarantees with probability at least $1-\delta$ that
\begin{align*}    
&\CVaR_\alpha(\Delta h) \leq \Delta h_{(n)} + C_{\text{tail}}(n,\delta) \\
&- \frac{1}{\alpha} \sum_{j=1}^{n-1} (\Delta h_{(j+1)} - \Delta h_{(j)}) \left( \frac{j}{n} - \varepsilon_n(\delta) - (1-\alpha) \right)_+,
\end{align*}
where $\{\Delta h^i\}_{i=1}^n$ are particle realizations and $\Delta h_{(1)} \leq \cdots \leq \Delta h_{(n)}$ denote their order statistics. 
Applying Corollary~\ref{cor:cvar_shifted} using ${w_i\! =\! \Delta h^i}$ and ${B \!=\! \varepsilon_n(\delta)}$, with the reformulation~\eqref{eq:lp_constraint}, the sufficient condition $\CVaR_\alpha(\Delta h) \leq 0$ is equivalent to the existence of $\theta \in \mathbb{R}$ and $\{s_i \geq 0\}_{i=1}^n$ satisfying
\begin{align}
    &C_{\text{tail}}(n,\delta) + \frac{\varepsilon_n(\delta)}{\alpha} \Delta h^i (u_t) + \frac{\alpha - \varepsilon_n(\delta)}{\alpha} \bigg( \theta \nonumber \\ &+ \frac{1}{n(\alpha - \varepsilon_n(\delta))} \sum_{j=1}^n s_j \bigg) \leq 0, \quad i \in [n], \label{eq:const1} \\
&s_i \geq \Delta h^i (u_t) - \theta, \quad s_i \geq 0, \quad i \in [n].\label{eq:const2}
\end{align}
Combining this~\ac{CVaR} constraint with the control objective and particle dynamics yields the following optimization problem, which serves as a computationally tractable approximation to~\eqref{eq:main_problem}.

\begin{problem}\label{prob:stochastic_cbf_qp}
Given particles $\{(x_t^i, d_t^i)\}_{i=1}^n$ sampled according to Assumption~\ref{ass:uncertainty} and a desired control $u_{\text{des}} \in \R^{n_u}$ from a nominal controller, solve
\begin{equation}\label{eq:final_qp}
\begin{aligned}
\min_{u_t \in \mathcal{U} , \theta, \{s_i\}} \quad & \|u_t - u_{\text{des}}\|^2 \\
\text{s.t. } \quad 
& \text{\eqref{eq:const1}, \ \eqref{eq:const2}.}   
\end{aligned}
\end{equation}
\end{problem}
We remark that, when $
\Delta h^i(u_t)$ is convex in $u_t$ and $\mathcal{U}$ is a convex set, Problem~\ref{prob:stochastic_cbf_qp} is a convex quadratic program. Moreover, when they are linear, the problem is a~\ac{QP}.

\begin{remark}
By applying Corollary~\ref{cor:cvar_shifted} to $\{b(x_t^i, d_t^i)\}_{i=1}^n$, the constraint in Corollary~\ref{cor:separable} can be reformulated as a linear program in auxiliary variables $(\theta, \{s_i\})$. When $a(u_t)$ is affine in $u_t$, the resulting problem remains a~\ac{QP} with the same computational complexity as Problem~\ref{prob:stochastic_cbf_qp}. 
\end{remark}

\section{Numerical Experiments}\label{sec:numerical}

We demonstrate the efficacy of our proposed approach through Monte Carlo simulations on a non-holonomic mobile robot navigating under uncertainty\footnote{Implementation available at: \url{https://github.com/kazuyae/Stochastic_CBF}}. The robot must reach a goal while respecting a geofence constraint, subject to both localization uncertainty and stochastic dynamics. Our approach is compared against two alternatives: a deterministic~\ac{CBF} baseline ignoring uncertainty and a probabilistic~\ac{CBF} using the conservative~\ac{DKW} inequality.

The robot operates in a 2D workspace with state ${x \!=\! [r_x, r_y, \theta]^\top \!\in\! \mathbb{R}^3}$, where $(r_x, r_y)$ is the position and $\theta$ is the heading angle. The control input is ${u \!=\! [v, \omega]^\top}$, where $v$ and $\omega$ are linear and angular velocities with ${|v| \! \leq\! v_{\rm m}}$, ${|\omega| \!\leq \!\omega_{\rm m}}$, where $v_{\rm m}$, $\omega_{\rm m}$ are derived from an Ackermann robot with a wheelbase of ${w_l}$ and a maximum steering angle of $\eta$. To obtain a formulation suitable for~\ac{CBF} design, we follow~\cite{Paul2019} and introduce a shifted reference point. Define
\begin{equation}
    R(\theta) = \begin{bmatrix} \cos\theta & -\sin\theta \\ \sin\theta & \cos\theta \end{bmatrix}, \quad p = s + \ell R(\theta)e_1,
\end{equation}
where $s = [r_x, r_y]^\top$ is the robot's center position, $e_1 = [1, 0]^\top$, and ${\ell > 0}$ is a design parameter. This shifted point $p$ lies orthogonal to the wheel axis and satisfies $\dot{p} = R(\theta)L^{-1}u$, where ${L = \text{diag}(1, 1/\ell)}$. 
Thus, both linear and angular velocity appear in the dynamics of $p$, yielding relative degree one for position-based barrier functions.

The barrier function $h(x)$ is a linear geofence: ${h(x) = r_y}$. The robot experiences two sources of uncertainty: Gaussian localization noise and Gaussian process disturbances with covariances $\Sigma_x,\Sigma_d$, respectively. A nominal PID controller drives the robot toward the goal at $(0, -0.05)$, while our~\ac{CBF}-based safety filter modifies the control input when necessary. In our simulation, the state and its time-varying uncertainty covariance are dynamically estimated via an~\ac{EKF} with process noise $Q = \Sigma_d$, measurement noise $R$, and initial covariance $P_0$, using partial measurements of $r_y$ and $\theta$.

We perform a Monte Carlo analysis with $10,000$ trials\footnote{The simulations use the constants 
${P_0 = \text{diag}([0.02, 0.02, 0.07]^2)}$, ${\Sigma_d \!=\! Q \!=\! \text{diag}([0.01, 0.01, 0.05]^2)}$, $R = \text{diag}([0.02, 0.07]^2)$,
$\mu_0 = [0, -0.5, \pi/2]^\top$, $v_{\rm m} = \SI{0.3}{m/s}$, $\omega_{\rm m} = \SI{0.67}{rad/s}$, $w_l = 2.5 \,{\rm m}$, $n = 500$, $\eta = 40^\circ$, $\alpha = 0.1$, $\delta = 0.1$, $\gamma = 0.2$, $\Delta t = \SI{0.5}{s}$.}. Each trial begins at $x_0 \sim \mathcal{N}(\mu_0, P_0)$
and
runs for up to $\SI{15}{s}$ with a time step of $\Delta t$.
A trial is deemed successful if the robot reaches within $\SI{0.02}{m}$ of the goal.  

Table~\ref{tab:monte_carlo_results} summarizes all trials and Figure~\ref{fig:trajectories} shows representative trajectories from $10$ randomly selected trials for each method. The deterministic~\ac{CBF} frequently violates the safety constraint due to its inability to account for state and process uncertainties. In contrast, the~\ac{DKW}-based approach exhibits excessive conservatism, causing many trajectories to stall before reaching the goal. 
The proposed method strikes a balance; the $2.0\%$ violation stays within the $\alpha=0.1$ bound ($10\%$), maintaining safety and consistent goal-reaching.

\begin{table}[t!]
    \centering
    \small
    \begin{tabular}{lcc}
    \hline
    \textbf{Method} & \textbf{Violation Rate} & \textbf{Reached} \\
    \hline
    Deterministic CBF & 100\% & 86\% \\
    Probabilistic CBF (DKW) & 0\% & 56\% \\
    Proposed (Sub-Gaussian) & 2.0\% & 100\% \\
    \hline
    \end{tabular}
    \caption{The summary of Monte Carlo simulations}
    \label{tab:monte_carlo_results}
     \vspace{-.5cm}
\end{table}

\begin{figure}[t!]
    \centering
    \includegraphics[width=0.95\linewidth]{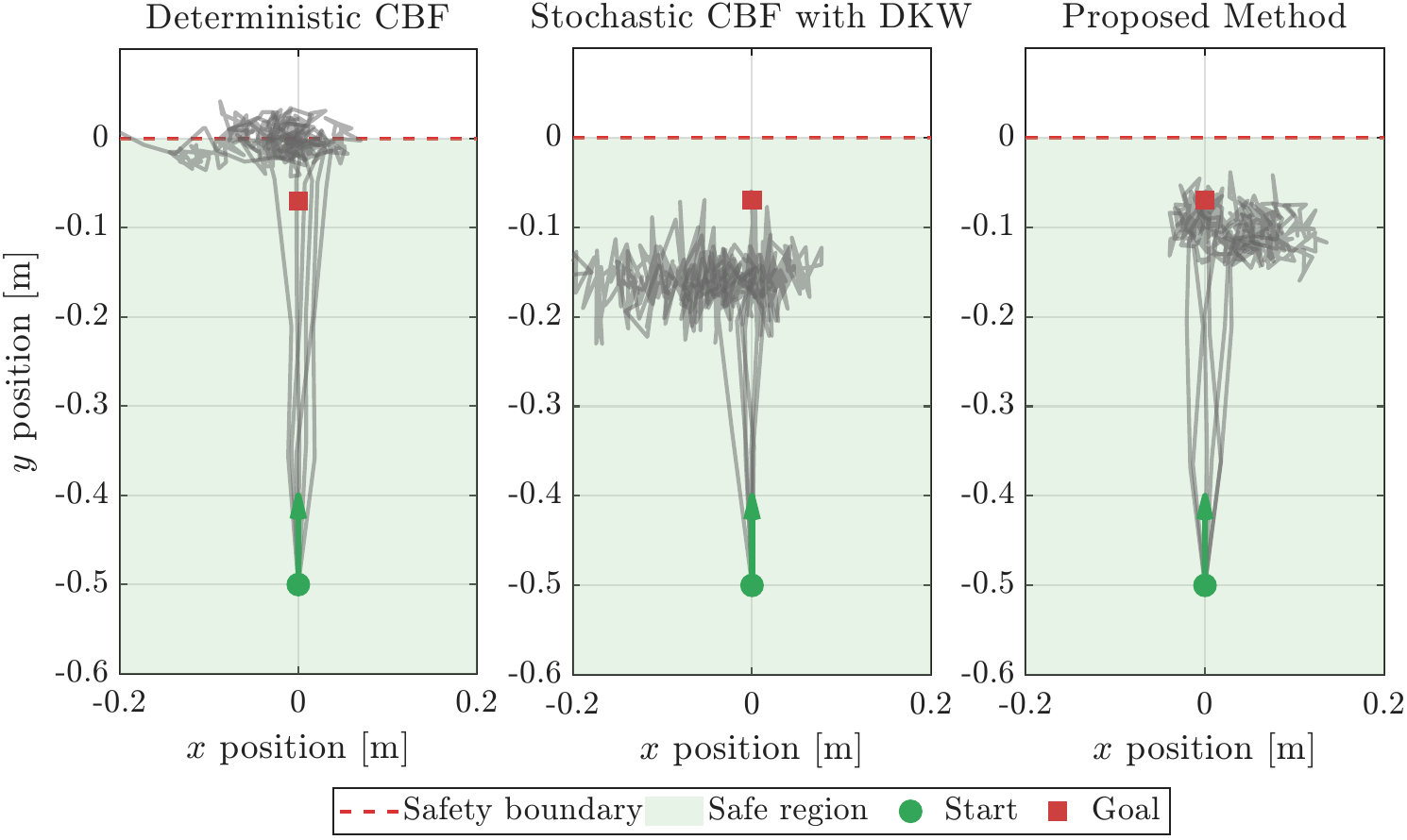} 
    \vspace{-.15cm}
    \caption{Representative trajectories for the three tested methods. The plots display the robot's path from the start position (green circle) to the goal position (red box), with the safety boundary at $y = 0$ (red dotted line). }
    \label{fig:trajectories}
    \vspace*{-0.2cm}
\end{figure}

\begin{figure}[t!]
    \centering
    \includegraphics[width=0.95\linewidth]{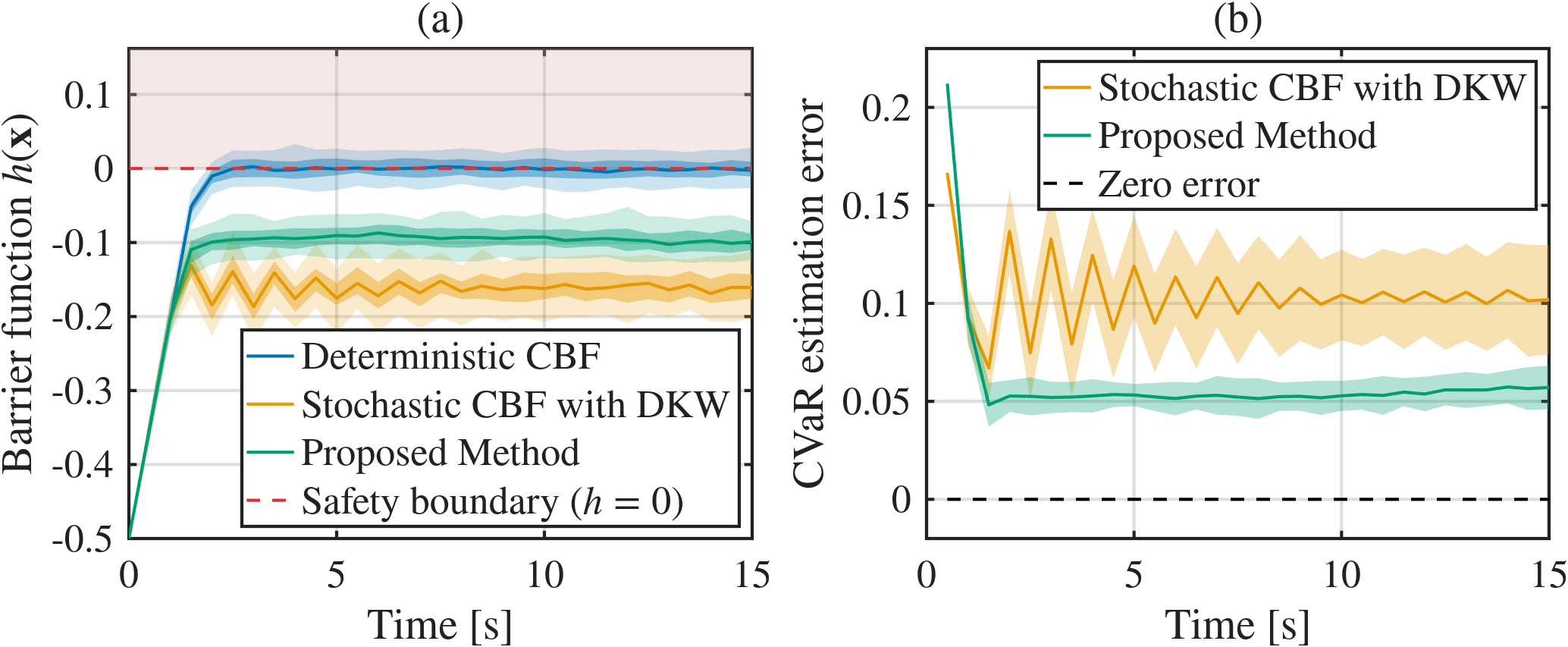}
    \vspace{-.15cm}
    \caption{Performance metrics across Monte Carlo trials. (a) Barrier function evolution over time for each method, with the safety boundary at $h = 0$ (red dotted line). (b)~\ac{CVaR} estimation error over time.} 
    \label{fig:monte_carlo_results}
    \vspace*{-0.38cm}
\end{figure}

Figure~\ref{fig:monte_carlo_results} provides additional insights into the performance of each method. Note that the~\ac{CVaR} estimation error was measured as the difference between the particle-based~\ac{CVaR} estimate when running probabilistic~\ac{CBF}-based methods and a Monte Carlo ground truth computed with $10^6$ samples. We again see that the~\ac{DKW} bound consistently overestimates the true risk, while the proposed method tracks the true~\ac{CVaR} more closely, resulting in tighter approximations.

\section{Conclusion}\label{sec:conclusion}
We presented a particle-based probabilistic~\ac{CBF} framework that provides probabilistic safety guarantees for systems subject to both state estimation uncertainty and process disturbances. Our approach exploits the sub-Gaussian structure of the barrier function increment to derive explicit concentration bounds, enabling the reformulation of the intractable probabilistic~\ac{CBF} condition as a tractable optimization program. Monte Carlo simulations on a non-holonomic mobile robot navigation problem demonstrated that the proposed approach outperforms both the deterministic baseline and the existing truncation-based probabilistic~\ac{CBF} method by providing tighter yet provably valid risk approximations. A current limitation is the double probability structure---the~\ac{CVaR} bound holds with probability $1-\delta$ while the safety constraint holds with probability $1-\alpha$---and unifying them remains an important direction for future work.

\begin{spacing}{0.925}
\bibliographystyle{IEEEtran}
\bibliography{asl_shortened, referencelist}
\end{spacing}

\end{document}